\newcommand{\hr}{{\mathcal H}}
\newcommand{\cn}{{\mathcal N }}
\newcommand{\cs}{{\mathcal S}}
\newcommand{\crr}{{\mathcal R}}
\newcommand{\fr}{{\mathcal F}}
\newcommand{\fri}{{\mathfrak I}}
\newcommand{\kr}{{\mathcal K}}
\newcommand{\cc}{{\mathbb C}}
\newcommand{\rr}{{\mathbb R}}
\newcommand{\nn}{{\mathbb N}}
\newcommand{\idn}{\mathbf{1}}
\newcommand{\eps}{{\varepsilon}}        
\newcommand{\A}{\mathcal A}
\newcommand{\B}{\mathcal B}
\newcommand{\cP}{\mathcal P}
\newcommand{\bS}{\mathbf S}
\newcommand{\tr}{\mathrm{tr}}
\newtheorem{theorem}{Theorem}
\newtheorem{corollary}[theorem]{Corollary}
\newtheorem{definition}[theorem]{Definition}
\newtheorem{lemma}[theorem]{Lemma}
\begin{document}

\title{Entanglement Transmission over Arbitrarily Varying Quantum Channels}
\author{
\IEEEauthorblockN{Rudolf Ahlswede\IEEEauthorrefmark{1}, Igor Bjelakovi\'c\IEEEauthorrefmark{2}, Holger Boche\IEEEauthorrefmark{2} and Janis N\"otzel\IEEEauthorrefmark{2}}
\IEEEauthorblockA{\IEEEauthorrefmark{1}Working Group Information and Complexity, Universit\"at Bielefeld, Germany\\
Email: ahlswede@mathematik.uni-bielefeld.de}
\IEEEauthorblockA{\IEEEauthorrefmark{2}Heinrich Hertz-Lehrstuhl f\"ur Informationstheorie und theoretische Informationstechnik, Technische Universit\"at Berlin, Germany\\
Email: \{holger.boche, igor.bjelakovic, janis.noetzel\}@mk.tu-berlin.de}
}

\maketitle

\begin{abstract}
We derive a regularized formula for the common randomness assisted entanglement transmission capacity of finite arbitrarily varying quantum channels (AVQC's). For finite AVQC's with positive capacity for classical message transmission we show, by derandomization through classical forward communication, that the random capacity for entanglement transmission equals the deterministic capacity for entanglement transmission.\\
This is a quantum version of the famous Ahlswede dichotomy.\\
In the infinite case, we derive a similar result for certain classes of AVQC's. At last, we give two possible definitions of symmetrizability of an AVQC.
\end{abstract}
\section{\label{sec:introduction}Introduction}
We consider the task of entanglement transmission over an arbitrarily varying channel. This can be viewed as a three-party game in the following sense.\\
The sender's goal is to transmit one half of a maximally entangled state to the receiver by some (large) number of uses of a quantum channel which is under the control of a third party, called the adversary. The adversary is free to choose the channel out of a set of memoryless, partly nonstationary channels (cf. the beginning of section \ref{sec:codes-and-capacity}). Only this given set is previously known to both sender and receiver.\\
To make the situation even worse, the adversary knows the encoding-decoding procedure employed by sender and receiver, so that they have to choose this procedure such that it works well for all possible choices of channels that the adversary might come up with.\\
Earlier results in comparable situations have been obtained by Ahlswede \cite{ahlswede-elimination},\cite{ahlswede-coloring},\cite{ahlswede-gelfand-pinsker} for classical arbitrarily varying channels and Ahlswede and Blinovsky \cite{ahlswede-blinovsky} in the case of classical message transmission over an arbitrarily varying quantum channel.\\
In both cases we encounter a dichotomy: Either the capacity for classical message transmission over the arbitrarily varying (quantum) channel is zero or it equals its common-randomness assisted capacity. Also, for these models there exists the notion of \emph{symmetrizability}. This is a necessary and sufficient single-letter condition for an arbitrarily varying (quantum) channel to have zero capacity for message transmission.
Our work is based on ideas mainly taken from \cite{ahlswede-elimination}, \cite{ahlswede-coloring} and our earlier results for compound quantum channels \cite{bbn-2}.\\
The paper is organized as follows: In Section \ref{Notation and Conventions} we fix the basic notation. Section \ref{sec:codes-and-capacity} introduces our channel model, in Section \ref{sec:main-result} we summarize those of our results that lead to the quantum Ahlswede dichotomy. An outline of the strategy of proof is given in Section \ref{sec:outline-of-proof}. Finally, in Section \ref{sec:symmetrizability} we address the question of symmetrizability.\\
Details of the proofs given in this paper as well as the converse part of the coding theorem can be picked up in the accompanying paper \cite{abbn-1}.
\section{\label{Notation and Conventions}Notation and conventions}
All Hilbert spaces are assumed to have finite dimension and are over the field $\cc$. $\mathcal{S}(\hr)$ is the set of states, i.e. positive semi-definite operators with trace $1$ acting on the Hilbert space $\hr$. If $\fr\subset \hr$ is a subspace of $\hr$ then we write $\pi_{\fr}$ for the maximally mixed state on $\fr$, i.e. $\pi_{\fr}=\frac{p_{\fr}}{\tr(p_{\fr})}$ where $p_{\fr}$ stands for the projection onto $\fr$. For a finite set $A$, $\mathfrak{P}(A)$ denotes the set of probability distributions on $A$.\\
The set of completely positive trace preserving (CPTP) maps
between the operator spaces $\mathcal{B}(\hr)$ and
$\mathcal{B}(\kr)$ is denoted by $\mathcal{C}(\hr,\kr)$.
$\mathcal{C}^{\downarrow}(\hr,\kr)$ stands for the set of
completely positive trace decreasing maps between
$\mathcal{B}(\hr)$ and $\mathcal{B}(\kr)$.\\
We use the base two
logarithm which is denoted by $\log$. The von Neumann entropy of
a state $\rho\in\mathcal{S}(\hr)$ is given by
\[S(\rho):=-\textrm{tr}(\rho \log\rho).  \]
The coherent information for $\cn\in \mathcal{C}(\hr,\kr) $ and
$\rho\in\mathcal{S}(\hr)$ is defined by
\[I_c(\rho, \cn):=S(\cn (\rho))- S( (id_{\mathcal{B}(\hr)}\otimes \cn)(|\psi\rangle\langle \psi|)  ),  \]
where $\psi\in\hr\otimes \hr$ is an arbitrary purification of the state $\rho$. Following the usual conventions we let $S_e(\rho,\cn):=S( (id_{\mathcal{B}(\hr)}\otimes \cn)(|\psi\rangle\langle \psi|)  )$ denote the entropy exchange.\\
As a measure of entanglement preservation we use entanglement fidelity. For $\rho\in\mathcal{S}(\hr)$ and $\cn\in
\mathcal{C}^{\downarrow}(\hr,\kr)$ it is given by
\[F_e(\rho,\cn):=\langle\psi, (id_{\mathcal{B}(\hr)}\otimes \cn)(|\psi\rangle\langle \psi|)     \psi\rangle,  \]
with $\psi\in\hr\otimes \hr$ being an arbitrary purification of the state $\rho$.\\
We use the diamond norm $||\cdot||_\lozenge$ as a measure of closeness in the set of quantum channels, which is given by
\begin{equation}\label{def:diamond-norm}
||\cn||_{\lozenge}:=\sup_{n\in \nn}\max_{a\in \mathcal{B}(\cc^n\otimes\hr),||a||_1=1}||(id_{n}\otimes \mathcal{N})(a)||_1,   \end{equation}
where $id_n:\mathcal{B}(\cc^n)\to \mathcal{B}(\cc^n)$ is the identity channel, and $\mathcal{N}:\mathcal{B}(\hr)\to \mathcal{B}(\kr)$ is any linear map, not necessarily completely positive. The merits of $||\cdot||_{\lozenge}$ are due to the following facts (cf. \cite{kitaev}). First, $||\cn||_{\lozenge}=1$ for all $\cn\in\mathcal{C}(\hr,\kr)$. Thus, $\mathcal{C}(\hr,\kr)\subset S_{\lozenge}$, where $S_{\lozenge}$ denotes the unit sphere of the normed space $(\mathcal{B}(\mathcal{B}(\hr),\mathcal{B}(\kr)),||\cdot||_{\lozenge} )$. Moreover, $||\cn_1\otimes \cn_2||_{\lozenge}=||\cn_1||_{\lozenge}||\cn_2||_{\lozenge}$ for arbitrary linear maps $\cn_1,\cn_2:\mathcal{B}(\hr)\to \mathcal{B}(\kr) $. Finally, the supremum in (\ref{def:diamond-norm}) needs only be taken over $n$ that range over $\{1,2,\ldots,\dim\hr   \}.$\\
We further use the diamond norm to define the function $D_\lozenge(\cdot,\cdot)$ on $\{(\fri,\fri'):\fri,\fri'\subset\mathcal C(\hr,\kr)\}$, which is for $\fri,\fri'\subset\mathcal C(\hr,\kr)$ given by
\begin{eqnarray*}
&&D_\lozenge(\fri,\fri'):=\\
&&\max\{\sup_{\cn\in\fri}\inf_{\cn'\in\fri'}||\cn-\cn'||_\lozenge,\sup_{\cn'\in\fri'}\inf_{\cn\in\fri}||\cn-\cn'||_\lozenge\}.
\end{eqnarray*}
For compact sets, this is basically the Hausdorff distance induced by the diamond norm.\\
For arbitrary $\fri,\fri'\subset\mathcal C(\hr,\kr)$, $D_\lozenge(\fri,\fri')\leq\epsilon$ implies that for every $\cn\in\fri$ ($\cn'\in\fri'$) there exists $\cn'\in\fri'$ ($\cn\in\fri)$ such that $||\cn-\cn'||_\lozenge\leq2\epsilon$. In this way $D_\lozenge$ gives a measure of distance between sets of channels.\\
For an arbitrary set $\bS$, $\bS^l:=\{(s_1,\ldots,s_l):s_i\in\bS\ \forall i\in\{1,\ldots,l\}\}.$ We write $s^l$ for the elements of $\bS^l$.\\
For $\fri\subset\mathcal C(\hr,\kr)$ we denote its convex hull by $conv(\fri)$, a notation which is adapted from \cite{webster}.
\section{\label{sec:codes-and-capacity}Codes for entanglement and message transmission}
An arbitrarily varying quantum channel (AVQC) generated by a set $\fri=\{\cn_s  \}_{s\in \bS}$ of CPTP maps with input Hilbert space $\hr$ and output Hilbert space $\kr$ is the family of CPTP maps $\{\cn_{s^l}:\mathcal{B}(\hr)^{\otimes l}\to\mathcal{B}(\kr)^{\otimes l}  \}_{l\in\nn,s^l\in \bS^{l}}$, where
\[\cn_{s^l}:=\cn_{s_1}\otimes \ldots\otimes \cn_{s_l}\qquad \qquad (s^l\in\bS^l).  \]
In order to relieve ourselves of the burden of complicated notation we will simply write $\fri=\{\cn_s \}_{s\in\bS}$ for the AVQC.\\
Even in the case of a finite set $\fri=\{\cn_s  \}_{s\in\bS}$, showing the existence of reliable codes for the AVQC $\fri$ is a non-trivial task, since for each block length $l\in\nn$ we have to deal with $|\fri|^{l}$ memoryless partly non-stationary quantum channels simultaneously.\\
Let $\fri=\{\cn_s\}_{s\in\bS}$ be an AVQC.\\
An $(l,k_l)-$\emph{random entanglement transmission code} for $\fri$ is a probability measure $\mu_l$ on $(\mathcal C(\fr_l,\hr)\times\mathcal C(\kr,\fr_l'),\sigma_l)$, where $\dim\fr_l=k_l$, $\fr_l\subset\fr_l'$ and the sigma-algebra $\sigma_l$ is chosen such that $F_e(\pi_{\fr_l},(\cdot)\circ\cn_{s^l}\circ(\cdot))$ is measurable w.r.t. $\sigma_l$ for every $s^l\in\bS^l$. Moreover, we assume that $\sigma_l$ contains all singleton sets. An example of such a sigma-algebra $\sigma_l$ is given by the product of sigma-algebras of Borel sets induced on $\mathcal C(\fr_l,\hr) $ and $\mathcal C(\kr,\fr_l') $ by the standard topologies of the ambient spaces.
\begin{definition}\label{def:random-cap-ent-trans}
A non-negative number $R$ is said to be an achievable entanglement transmission rate for $\fri$ with random codes if there is a sequence of $(l,k_l)-$random entanglement transmission codes such that\\
1) $\liminf_{l\rightarrow\infty}\frac{1}{l}\log k_l\geq R$ and\\
2) $\lim_{l\rightarrow\infty}\inf_{s^l\in\bS^l}\int F_e(\pi_{\fr_l},\crr^l\circ\cn_{s^l}\circ\cP^l)d\mu_l(\cP^l,\crr^l)=1$.\\
The random capacity $\A_{\textup{r}}(\fri)$ for entanglement transmission over $\fri$ is defined by
\begin{eqnarray*}
\A_{\textup{r}}(\fri):=\sup\{R&:&R \textrm{ is an achievable entanglement trans-}\\
&&\textrm{mission rate for } \fri \textrm{ with random codes}\}.
\end{eqnarray*}
\end{definition}
We are now in a position to introduce deterministic codes: An $(l,k_l)-$code for entanglement transmission over $\fri$ is an $(l,k_l)-$random code for $\fri$ with $\mu_l(\{(\mathcal{P}^l,\crr^l)  \}  )=1$ for some encoder-decoder pair $(\mathcal{P}^l,\crr^l)$\footnote{This explains our requirement on $\sigma_l$ to contain all singleton sets.} and $\mu_l(A)=0$ for any $A\in\sigma_l$ with $(\mathcal{P}^l,\crr^l)\notin A $. We will refer to such measures as point measures in what follows.
\begin{definition}
A non-negative number $R$ is a deterministically achievable rate for entanglement transmission over $\fri$ if it is achievable in the sense of Definition \ref{def:random-cap-ent-trans} for random codes  with \emph{point measures} $\mu_l$.\\
The deterministic capacity $\A_{\textup{d}}(\fri)$ for entanglement transmission over the AVQC $\fri$ is  given by
\begin{eqnarray*}
\A_{\textup{d}}(\fri):=\sup \{R&:& R \textrm{ is a deterministically achievable rate}\\
&&\textrm{for entanglement transmission over }\fri  \}.
\end{eqnarray*}
\end{definition}
Finally, we shall need the notion of the classical deterministic capacity $C_{\textrm{det}}(\fri)$ of the AVQC $\fri=\{\cn_s  \}_{s\in\bS}$ with \emph{average} error criterion.
An $(l,M_l)$-(deterministic) code for message transmission is a family of pairs $\mathfrak{C}_l=(\rho_i, D_i)_{i=1}^{M_l}$
where $\rho_1,\ldots ,\rho_{M_l}\in\cs(\hr^{\otimes l})$, and positive semi-definite operators $D_1,\ldots, D_{M_l}\in\mathcal{B}(\kr^{\otimes l})$ satisfying $\sum_{i=1}^{M_l}D_i=\idn_{\kr^{\otimes l}} $. The underlying error criterion we shall use is the worst-case average probability of error of the code $\mathfrak{C}_l$ which is given by
\begin{equation}\label{def-av-error}
\bar P_{e,l}(\fri):=\sup_{s^l\in \bS^l}\bar P_e (\mathfrak{C}_l,s^l),
\end{equation}
where for $s^l\in\bS^l$ we set
\[ P_e (\mathfrak{C}_l,s^l):=\frac{1}{M_l}\sum_{i=1}^{M_l}\left(1- \tr(\cn_{s^l}(\rho_i)D_i)  \right).  \]
The achievable rates and the classical deterministic capacity  $C_{\textrm{det}}(\fri)$ of $\fri$, with respect to the error criterion given in (\ref{def-av-error}), are then defined in the usual way (see e.g. \cite{ahlswede-blinovsky}).\\
\section{\label{sec:main-result}Main results}
The compound quantum channel generated by $conv(\fri)$ (cf. \cite{bbn-2} for the relevant definitions) shall play the crucial role in our derivation of the coding results stated below.\\
Our main result, a quantum version of Ahlswede's dichotomy for finite AVQCs, goes as follows:
\begin{theorem}\label{quant-ahlswede-dichotomy}
Let $\fri=\{\cn_s  \}_{s\in \bS}$ be a finite AVQC.
\begin{enumerate}
\item The random capacity for entanglement transmission over $\fri$ is given by
  \begin{equation}\label{eq:ahlswede-dichotomy-1}
    \A_{\textup{r}}(\fri)=\lim_{l\to\infty}\frac{1}{l}\max_{\rho\in\cs(\hr^{\otimes l})}\inf_{\cn\in conv(\fri)}I_c(\rho, \cn^{\otimes l}).
  \end{equation}
\item Either $C_{\textup{det}}(\fri)=0 $ or else $\A_{\textup{d}}(\fri)= \A_{\textup{r}}(\fri)$.
\end{enumerate}
\end{theorem}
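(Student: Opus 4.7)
The plan is to prove part 1 by reducing to the compound quantum channel generated by $conv(\fri)$, whose entanglement transmission capacity equals the right-hand side of (\ref{eq:ahlswede-dichotomy-1}) by our earlier result in \cite{bbn-2}; denote this number by $\bar Q$. The converse $\A_{\textup{r}}(\fri)\leq\bar Q$ rests on the observation that for any $\cn=\sum_{s\in\bS}\lambda_s\cn_s\in conv(\fri)$,
\[\cn^{\otimes l}=\sum_{s^l\in\bS^l}\lambda_{s^l}\,\cn_{s^l},\qquad \lambda_{s^l}:=\prod_{i=1}^l\lambda_{s_i},\]
so by linearity of $F_e$ in the channel argument,
\[\int F_e(\pi_{\fr_l},\crr^l\circ\cn^{\otimes l}\circ\cP^l)\,d\mu_l\;\geq\;\inf_{s^l\in\bS^l}\int F_e(\pi_{\fr_l},\crr^l\circ\cn_{s^l}\circ\cP^l)\,d\mu_l.\]
Hence any random code for $\fri$ automatically serves the compound channel $conv(\fri)$ at the same rate. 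For the direct part $\A_{\textup{r}}(\fri)\geq\bar Q$ I would apply Ahlswede's robustification technique from \cite{ahlswede-elimination,ahlswede-coloring}: starting from a near-optimal code for the compound channel $conv(\fri)$, form the random code that is the uniform mixture over permutations of the time slots of the encoder and decoder; because the fidelity then depends only on the empirical type of $s^l$ and every such type corresponds to a channel in $conv(\fri)$, the fidelity estimate on the compound channel transfers to a uniform bound over $s^l\in\bS^l$.

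For part 2, the inequality $\A_{\textup{d}}(\fri)\leq\A_{\textup{r}}(\fri)$ is immediate since point measures form a subclass of random codes. Assuming $C_{\textup{det}}(\fri)>0$, the reverse inequality follows from Ahlswede's elimination technique \cite{ahlswede-elimination}, now in the quantum setting. Starting from an optimal random code $\mu_l$ for $\fri$, I derandomize to a uniform mixture $\mu_l^{(N)}=\frac{1}{N}\sum_{j=1}^{N}\delta_{(\cP^l_j,\crr^l_j)}$ over $N=\mathrm{poly}(l)$ i.i.d.\ samples from $\mu_l$. Because $|\bS^l|=|\bS|^l$ is merely exponential in $l$ while the fidelity functional is bounded in $[0,1]$, Hoeffding concentration combined with a union bound over $s^l$ shows that an appropriate polynomial $N$ guarantees a realization $(\cP^l_j,\crr^l_j)_{j=1}^N$ for which $\mu_l^{(N)}$ matches the asymptotic fidelity of $\mu_l$ uniformly in $s^l$. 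Next, since $C_{\textup{det}}(\fri)>0$, the index $j\in\{1,\ldots,N\}$ can be transmitted to the receiver by a deterministic classical prefix code of length $l'=O(\log N)=O(\log l)$, a vanishing fraction of the total blocklength $l+l'$. Concatenating this prefix with the entanglement transmission code indexed by $j$ yields a deterministic $(l+l',k_l)$-code of rate asymptotically equal to $\A_{\textup{r}}(\fri)$.

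The main obstacle will be the derandomization step: I need the functional $s^l\mapsto\int F_e(\pi_{\fr_l},\crr^l\circ\cn_{s^l}\circ\cP^l)\,d\mu_l$ to concentrate uniformly in $s^l$ around its $\mu_l$-expectation. Finiteness of $\bS$ and boundedness of $F_e$ make a polynomial sample size adequate via standard Chernoff/Hoeffding bounds, but some care is needed to reconcile the small but nonzero average error of the classical prefix with the near-perfect entanglement fidelity demanded on the main block. This should be handled by letting $l'$ grow slightly faster than $\log l$ while remaining $o(l)$, and by invoking continuity of $F_e$ in the diamond norm to absorb the perturbation caused by occasional decoding mistakes on the prefix into the asymptotic vanishing of the error.
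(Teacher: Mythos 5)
Your proposal is correct and follows essentially the same route as the paper: the direct part of (1) via compound-channel codes for $conv(\fri)$ made permutation-invariant by Ahlswede's robustification, the converse by observing that a random AVQC code serves the compound channel $conv(\fri)$ (which the paper defers to the companion paper), and part (2) via reduction of the random code to polynomially many deterministic codes by concentration plus a union bound over $\bS^l$, followed by derandomization through a rate-zero classical prefix enabled by $C_{\textup{det}}(\fri)>0$. The paper's Lemma on random code reduction fixes the support size at $l^2$ and uses a Markov/Bernstein-type bound where you use Hoeffding, but these are interchangeable details.
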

\emph{\underline{Remark}. It is clear from convexity of entanglement fidelity in the input state that $\A_{\textup{d}}(\fri)\le C_{\textup{det}}(\fri) $, so that $C_{\textup{det}}(\fri)=0 $ implies  $\A_{\textup{d}}(\fri)=0 $. Therefore, Theorem \ref{quant-ahlswede-dichotomy} determines  $\A_{\textup{d}}(\fri) $, in principle, up to required regularization on the right-hand side of (\ref{eq:ahlswede-dichotomy-1}) and the question of when $C_{\textup{det}}(\fri)=0$ happens. We derive a non-single-letter necessary and sufficient condition for the latter in Section \ref{sec:symmetrizability}.
}\\
In the case that $\bS$ is infinite, we have the following statement:
\begin{theorem}\label{avqc-infinite}
Let $\fri=\{\cn_s  \}_{s\in\bS}$ be any AVQC and $\partial\mathcal C$ the topological boundary of $\mathcal C(\hr,\kr)$. If $D_{\lozenge}(\tilde\fri,\partial \mathcal{C})>0$, then
\[\A_{\textup{r}}(\fri)=\lim_{l\to\infty}\frac{1}{l}\max_{\rho\in\cs(\hr^{\otimes l})}\inf_{\cn\in conv(\fri)}I_c(\rho,\cn^{\otimes l}).  \]
\end{theorem}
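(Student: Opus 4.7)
My strategy is to reduce Theorem \ref{avqc-infinite} to the finite case of Theorem \ref{quant-ahlswede-dichotomy} by approximation in the diamond norm, using the compound channel converse of \cite{bbn-2} for the upper bound. Since $\mathcal{C}(\hr,\kr)$ is compact in its finite-dimensional ambient space, for every $\tau > 0$ there is a finite $\fri_\tau\subset\fri$ with $D_\lozenge(\fri_\tau,\fri)\leq\tau$. The monotonicity $conv(\fri_\tau)\subset conv(\fri)$ yields
\[\lim_{l\to\infty}\tfrac{1}{l}\max_{\rho}\inf_{\cn\in conv(\fri_\tau)}I_c(\rho,\cn^{\otimes l})\;\geq\;\lim_{l\to\infty}\tfrac{1}{l}\max_{\rho}\inf_{\cn\in conv(\fri)}I_c(\rho,\cn^{\otimes l}),\]
and Theorem \ref{quant-ahlswede-dichotomy} applied to $\fri_\tau$ produces $(l,k_l)$-random codes for $\fri_\tau$ attaining the left-hand side with fidelity tending to $1$ uniformly on $\fri_\tau^l$.

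To extend these codes from $\fri_\tau$ to $\fri$, I would pick, for each $s^l\in\bS^l$, a coordinate-wise approximant $\tilde s^l$ in the index set of $\fri_\tau$ with $\|\cn_{s_j}-\cn_{\tilde s_j}\|_\lozenge\leq 2\tau$; the submultiplicativity of $\|\cdot\|_\lozenge$ together with a telescoping identity then yields $\|\cn_{s^l}-\cn_{\tilde s^l}\|_\lozenge\leq 2l\tau$, and the Lipschitz dependence of $F_e$ on the channel degrades the fidelity by at most $O(l\tau)$. A diagonal coupling $\tau_n\to 0$, $l_n\to\infty$ with $l_n\tau_n\to 0$, combined with the exponential fidelity decay inherent to the finite-AVQC random coding construction underlying Theorem \ref{quant-ahlswede-dichotomy}, produces a sequence of codes for $\fri$ whose rates approach the claimed right-hand side and whose worst-case fidelities tend to one.

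The converse $\A_{\textup{r}}(\fri)\leq \lim_{l\to\infty}\tfrac{1}{l}\max_{\rho}\inf_{\cn\in conv(\fri)}I_c(\rho,\cn^{\otimes l})$ is inherited from \cite{bbn-2}, since every $(l,k_l)$-random code for $\fri$ is in particular a code for the compound channel generated by $conv(\fri)$. The hypothesis $D_\lozenge(\tilde\fri,\partial\mathcal{C})>0$ enters exactly here: it keeps the closed convex hull of $\fri$ inside the interior of $\mathcal{C}(\hr,\kr)$, so the associated Choi states enjoy a common spectral lower bound and the entropy-based continuity estimates in the converse proof of \cite{bbn-2} do not deteriorate with the regularization parameter.

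The main obstacle is the interplay between the regularization $l\to\infty$ and the continuity in $\tau$: a naive application of the Fannes-Audenaert inequality to $\cn^{\otimes l}$ introduces an $l\log\dim\kr$ prefactor that would leave an uncontrolled $\tau\log\dim\kr$ error in the regularized formula, and removing it requires the strict positivity of the Choi states supplied by the boundary hypothesis. It is this uniform positivity that upgrades the non-uniform Fannes bound to an $l$-independent Lipschitz estimate and validates the interchange of limits $\lim_{\tau\to 0}\lim_{l\to\infty}$.
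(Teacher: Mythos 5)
Your overall plan (reduce to the finite case via diamond-norm approximation, inherit the converse from the compound channel over $conv(\fri)$) is in the right spirit, but the way you discretize creates a genuine gap. You approximate $\fri$ from the \emph{inside} by a finite $\tau$-net $\fri_\tau\subset\fri$ and then transport the codes to all of $\fri$ by continuity, paying a fidelity loss of order $l\tau$ via the telescoping bound $\|\cn_{s^l}-\cn_{\tilde s^l}\|_\lozenge\le 2l\tau$. For this loss to vanish you need $l\tau\to 0$, i.e.\ $\tau$ must shrink faster than $1/l$. But the finite-AVQC coding result you invoke (Theorem \ref{conversion-of-compound-codes}, Corollary \ref{achievability-finite-avqc}) only guarantees fidelity $1-(l+1)^{|\fri_\tau|}2^{-lc}$ beyond a blocklength threshold, where both the exponent $c$ and the threshold depend on the finite AVQC, and the robustification prefactor is $(l+1)^{|\fri_\tau|}$. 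A $\tau$-net in the ambient space of channels has $|\fri_\tau|\sim\tau^{-D}$, so once $\tau\lesssim 1/l$ the prefactor becomes $(l+1)^{l^{D}}$ and swamps $2^{-lc}$; if instead you let $\tau$ shrink slowly enough to keep $|\fri_\tau|$ under control, then $l\tau\not\to 0$ and the continuity loss does not vanish. The diagonal coupling ``$l_n\tau_n\to 0$'' cannot satisfy both constraints simultaneously, so the achievability part does not close as written.

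The paper avoids this by approximating from the \emph{outside}: using Theorem \ref{polytope-approximation} it finds a polytope $Q_\eps$ with $conv(\fri)\subseteq\tilde\fri\subseteq Q_\eps\subset\mathcal C(\hr,\kr)$ and takes $\fri_\eps$ to be its finitely many vertices. Every channel in $\fri$ is then an \emph{exact} convex combination of members of $\fri_\eps$, and since $F_e$ is affine in the channel, the codes for the finite AVQC $\fri_\eps$ work for $\fri$ with no fidelity degradation and with $\eps$ held fixed as $l\to\infty$; continuity (Lemma 16 of \cite{bbn-2}) is only needed to compare $\inf_{\cn\in conv(\fri_\eps)}I_c(\rho,\cn^{\otimes k})$ with $\inf_{\cn\in conv(\fri)}I_c(\rho,\cn^{\otimes k})$ at a fixed blocklength $k$, where no constants blow up. This is also where the hypothesis $D_\lozenge(\tilde\fri,\partial\mathcal C)>0$ actually enters: it guarantees that the outer polytope can be taken inside $\mathcal C(\hr,\kr)$, so that its vertices are genuine channels. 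Your claim that the hypothesis is needed in the converse, to provide a spectral lower bound on Choi states for Fannes-type estimates, does not reflect the paper's argument; the converse is the standard reduction to the compound channel generated by $conv(\fri)$ and needs no interiority assumption, and the remark following Theorem \ref{avqc-infinite} states explicitly that the condition is an artifact of the finite-approximation strategy in the direct part.
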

\emph{\underline{Remark}.
The condition $D_{\lozenge}(\tilde\fri,\partial \mathcal{C})>0$ in Theorem \ref{avqc-infinite} stems from our strategy of approximation of an infinite AVQC through a sequence of finite AVQC's. We hope to be able to drop this artificial constraint in the final version of the paper.
}
\section{\label{sec:outline-of-proof}Outline of the proof}
This section is split into three parts. First, we demonstrate the existence of asymptotically optimal sequences of random codes (in the sense of (\ref{eq:ahlswede-dichotomy-1})). We use Ahlswede's robustification technique originally presented in \cite{ahlswede-coloring} in the form presented in \cite{ahlswede-gelfand-pinsker} and our results on compound quantum channels \cite{bbn-2} in order to get a sequence of finitely supported probability measures $\mu_l$ on the set of encoding and decoding maps. Second, we show that the support of each $\mu_l$ can be taken as a set with cardinality $l^2$.\\
Third, we show that $C_d(\fri)>0$ implies that we can derandomize our code without any asymptotic loss of capacity, so that $\A_d(\fri)=\A_r(\fri)$ holds.\\
Fourth, we briefly sketch how approximation of $conv(\fri)$ by convex polytopes leads to Theorem \ref{avqc-infinite}.
\subsection{\label{subsec:random-achieve-finite}Finite AVQC}
Let $l\in\nn$ and let $\mathbf{P}_l$ denote the set of permutations acting on $\{1,\ldots, l\}$. Suppose we are given a finite set $\bS$. Then each permutation $P\in \mathbf{P}_l$ induces an action on $\bS^l$ by  $P:\mathbf S^l\rightarrow\mathbf S^l$, $P(s^l)_i:=s_{P(i)}$. By $T(l,\bS)$, we denote the set of types on $\bS$ induced by the elements of $\bS^l$, i.e. the set of empirical distributions on $\bS$ generated by sequences in $\bS^l$. Now Ahlswede's robustification can be stated as follows.
\begin{theorem}[Robustification technique, cf. \cite{ahlswede-gelfand-pinsker}]\label{robustification-technique}
If a function $f:\bS^l\to [0,1]$ satisfies
\begin{equation}\label{eq:robustification-1}
 \sum_{s^l\in\bS^l}f(s^l)q(s_1)\cdot\ldots\cdot q(s_l)\ge 1-\gamma
\end{equation}
for all $q\in T(l,\bS)$ and some $\gamma\in [0,1]$, then
\begin{equation}\label{eq:robustification-2}
  \frac{1}{l!}\sum_{P\in\mathbf{P}_l}f(P(s^l))\ge 1-(l+1)^{|\bS  |}\cdot \gamma\qquad \forall s^l\in \bS^l.
\end{equation}
\end{theorem}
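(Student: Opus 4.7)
The plan is to prove the robustification via the method of types applied to the specific distribution $q$ equal to the empirical type of $s^l$, combined with the observation that the permutation-average on the left of \eqref{eq:robustification-2} is really the uniform average over the type class of $s^l$.

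First, I would fix $s^l\in\bS^l$ and let $p\in T(l,\bS)$ be its type. Since every permutation $P\in\mathbf{P}_l$ preserves type, the orbit $\{P(s^l):P\in\mathbf{P}_l\}$ lies inside the type class $T_p\subset\bS^l$, and in fact covers $T_p$ uniformly: each $t^l\in T_p$ is visited exactly $\prod_{s\in\bS}(lp(s))!$ times, so
\begin{equation*}
\frac{1}{l!}\sum_{P\in\mathbf{P}_l}f(P(s^l))=\frac{1}{|T_p|}\sum_{t^l\in T_p}f(t^l).
\end{equation*}
Thus the problem reduces to lower-bounding the uniform average of $f$ over a single type class $T_p$.

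Second, I would apply hypothesis \eqref{eq:robustification-1} with the choice $q:=p$ (which is admissible because $p\in T(l,\bS)$). The key point of the method of types is that for any $t^l\in T_p$, the product probability $p(t_1)\cdots p(t_l)$ depends only on $p$, not on the specific ordering of $t^l$; call this common value $\alpha_p:=\prod_{s\in\bS}p(s)^{lp(s)}$. Hence the left-hand side of \eqref{eq:robustification-1} is at least $\alpha_p\sum_{t^l\in T_p}f(t^l)$, and rewriting in terms of $1-f$ gives
\begin{equation*}
\alpha_p\sum_{t^l\in T_p}\bigl(1-f(t^l)\bigr)\le \gamma,
\end{equation*}
so that $\tfrac{1}{|T_p|}\sum_{t^l\in T_p}(1-f(t^l))\le \gamma/(|T_p|\,\alpha_p)$.

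Third, I would invoke the standard lower bound $|T_p|\,\alpha_p = p^{\otimes l}(T_p)\ge (l+1)^{-|\bS|}$ from the method of types (this is the usual consequence of $|T_p|\ge (l+1)^{-|\bS|}2^{lH(p)}$ together with $\alpha_p=2^{-lH(p)}$). Substituting yields $\tfrac{1}{|T_p|}\sum_{t^l\in T_p}(1-f(t^l))\le (l+1)^{|\bS|}\gamma$, and combining with the first step gives \eqref{eq:robustification-2}. The only subtle point is the orbit-counting identity in the first step and the type-class lower bound in the third step; both are elementary but must be applied in exactly this order so that the factor $(l+1)^{|\bS|}$ appears with the correct exponent $|\bS|$ (rather than, say, $|\bS|-1$ or $l$). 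No deeper obstacle arises, since the hypothesis is tailor-made to be used at $q=p$ where the product distribution $p^{\otimes l}$ restricted to $T_p$ is exactly uniform.
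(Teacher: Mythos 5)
Your proof is correct, and it is exactly the standard method-of-types argument for Ahlswede's robustification technique: the paper itself gives no proof here but cites \cite{ahlswede-gelfand-pinsker}, where the argument proceeds along the same lines (permutation average equals the uniform average over the type class, then apply the hypothesis at $q=p$ and use $p^{\otimes l}(T_p)\ge (l+1)^{-|\bS|}$). All three steps --- the orbit-counting identity, the uniformity of $p^{\otimes l}$ on $T_p$, and the type-class probability bound --- are applied correctly, so nothing is missing.
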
\ \\
As another ingredient for the arguments to follow we need an achievability result for the compound channel $conv(\fri)$. We set for $k\in\nn$
\[conv(\fri)^{\otimes k}:=\{\cn_q^{\otimes k}  \}_{q\in\mathfrak{P}(\bS)}. \]
\begin{lemma}\label{compound-achiev-input}
Let $k\in \nn$. Suppose that
\[\max_{\rho\in\cs(\hr^{\otimes k})}\inf_{\cn\in conv(\fri)^{\otimes k}}I_c(\rho,\cn)>0  \]
holds. Then for each sufficiently small $\eta>0$ there is a sequence of $(l,k_l)$-codes $(\mathcal{P}^l, \crr^l)_{l\in\nn}$ such that for all $l\ge l_0(\eta)$ the inequalities
\begin{equation}\label{eq:comp-input-1}
  F_{e}(\pi_{\fr_l}, \crr^l\circ \cn^{\otimes l}\circ \mathcal{P}^l)\ge 1- 2^{-lc}\qquad \forall \cn\in conv(\fri),
\end{equation}
\begin{equation}
  \frac{1}{l}\log \dim \fr_l\ge \frac{1}{k}\max_{\rho\in\cs(\hr^{\otimes k})}\inf_{\cn\in conv(\fri)^{\otimes k}}I_c(\rho,\cn)  -\eta,
\end{equation}
hold with a constant $c=c(k,\dim \hr,\dim\kr, conv(\fri),\eta)>0$.
\end{lemma}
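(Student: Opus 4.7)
The lemma is essentially the achievability half of the entanglement transmission coding theorem for the compound quantum channel generated by $conv(\fri)$, refined to an exponentially small error bound and expressed at a finite blocklength $k$. My plan is to reduce it to the compound coding theorem of \cite{bbn-2}.

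View $\hat\fri := \{\cn^{\otimes k} : \cn \in conv(\fri)\}$ as a compound channel with input space $\hr^{\otimes k}$ and output space $\kr^{\otimes k}$, and fix a state $\rho^{\ast}\in\cs(\hr^{\otimes k})$ that attains $R^{\ast}:=\max_{\rho}\inf_{\cn\in conv(\fri)}I_c(\rho,\cn^{\otimes k})>0$. The compound-channel direct part from \cite{bbn-2}, applied to $\hat\fri$ with the fixed input $\rho^{\ast}$, produces for every $\eta>0$ and all $m\ge m_0(\eta)$ an $(m,\tilde k_m)$-code $(\tilde{\mathcal P}^m,\tilde\crr^m)$ with $\frac{1}{m}\log\tilde k_m\ge R^{\ast}-k\eta$ and $F_e(\pi,\tilde\crr^m\circ\cn^{\otimes mk}\circ\tilde{\mathcal P}^m)\ge 1-\epsilon_m$ uniformly in $\cn\in conv(\fri)$. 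Setting $l:=mk$ and declaring $(\mathcal P^l,\crr^l):=(\tilde{\mathcal P}^m,\tilde\crr^m)$ delivers $\frac{1}{l}\log\dim\fr_l\ge \frac{R^{\ast}}{k}-\eta$, which is the rate bound claimed in the lemma. For $l$ not divisible by $k$, one simply rounds $m=\lfloor l/k\rfloor$ and appends $l-mk<k$ arbitrary channel uses, losing $O(1/l)$ in rate.

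The substantive point is to upgrade the error $\epsilon_m$ to the exponential form $2^{-lc}$ with $c=c(k,\dim\hr,\dim\kr,conv(\fri),\eta)>0$ and uniform over the infinite set $conv(\fri)$. The construction underlying \cite{bbn-2} is BSST/Klesse--style: encode into a Haar-random subspace of a typical subspace for the state $(\mathrm{id}\otimes\cn^{\otimes k})(|\psi^{\ast}\rangle\langle\psi^{\ast}|)$ averaged over a suitable reference channel, and decode via Uhlmann's theorem; the resulting entanglement-fidelity defect concentrates exponentially in $m$ by Levy's lemma on the unitary group. To get a uniform bound over $conv(\fri)$, cover it by a $\tau_m$-net $\{\cn_j\}$ in diamond norm with $\tau_m=2^{-m\beta}$ for small $\beta>0$; since $conv(\fri)$ lives in a fixed-dimensional ambient space the cardinality of the net is only polynomial in $1/\tau_m$, hence subexponential in $m$. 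A union bound keeps the error for the net points exponentially small, and the tensor-product inequality $\|\cn^{\otimes mk}-\cn_j^{\otimes mk}\|_\lozenge\le mk\|\cn-\cn_j\|_\lozenge$ together with the Lipschitz property of entanglement fidelity in the channel transfers the bound to all of $conv(\fri)$ at the cost of an additional $mk\tau_m$ term, which is still $2^{-lc}$ for $\beta$ chosen large enough.

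The main obstacle, and the step where one has to be most careful, is precisely this uniform exponential decay over the uncountable family $conv(\fri)$: the compound-channel theorem in \cite{bbn-2} is typically stated as ``error $\to 0$'', so one must reopen the random-coding argument to track the rate of convergence and to confirm that the $\tau_m$-net argument does not swallow the coherent-information gap $\eta$. Everything else, including choosing the input $\rho^{\ast}$ on $\hr^{\otimes k}$ and translating between blocklengths $m$ and $l=mk$, is routine bookkeeping.
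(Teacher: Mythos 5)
Your proposal follows essentially the same route as the paper: the authors' proof is a two-line reduction to the compound BSST Lemma and Lemma~9 of \cite{bbn-2} (which supply exactly the exponential, uniform-over-$conv(\fri)$ random-coding bound you reconstruct via the net and concentration argument), followed by the same reblocking of codes for $\cn_q^{\otimes mk}$ into codes for $\cn_q$. Your write-up just makes explicit the ingredients the paper delegates to the citation, so no further comparison is needed.
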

\begin{proof}The proof follows from an application of the compound BSST Lemma and Lemma 9 in \cite{bbn-2}. These two statements show the existence of well behaved codes for the channels $\cn_q^{\otimes m\cdot k}$, where $m$ depends on $conv(\fri),\ k$ and $\eta$. For fixed $k$, all we have to do is convert these codes to codes for the channels $\cn_q$.
\end{proof}
In the next step we will combine the robustification technique and Lemma \ref{compound-achiev-input} to prove the existence of good random codes for the AVQC $\fri=\{\cn_s  \}_{s\in\bS}$.\\
Recall that there is a canonical action of $\mathbf{P}_l$ on $\mathcal{B}(\hr)^{\otimes l}$ given by $P_\hr(a_1\otimes\ldots\otimes a_l):=a_{P^{-1}(1)}\otimes\ldots\otimes a_{P^{-1}(n)}$. It is easy to see that $P_\hr(a)=U_{P}aU_{P}^{\ast},\ (a\in\B(\hr)^{\otimes l})$ with the unitary operator $U_{P}:\hr^{\otimes l}\to\hr^{\otimes l}$ defined by $U_{P}(x_1\otimes \ldots\otimes x_l)=x_{P^{-1}(1)}\otimes \ldots\otimes x_{P^{-1}(l)}$.
\begin{theorem}[Conversion of compound codes]\label{conversion-of-compound-codes}
Let $\fri=\{\cn_s  \}_{s\in\bS}$ be a finite AVQC. For each $k\in\nn$ and any sufficiently small $\eta>0$ there is a sequence of $(l,k_l)$-codes $(\mathcal{P}^l,\crr^l)_{l\in\mathbb N}$, $\mathcal{P}^l\in\mathcal{C}(\fr_l,\hr^{\otimes l}),\crr^l\in\mathcal{C}(\kr^{\otimes l},\fr'_l)$, for the compound channel built up from $conv(\fri)$ satisfying
\begin{equation}\label{conversion-1}
\frac{1}{l}\log \dim \fr_l\ge \frac{1}{k}\max_{\rho\in\cs(\hr^{\otimes k})}\inf_{\cn\in conv(\fri)^{\otimes k}}I_c(\rho,\cn)  -\eta
\end{equation}
and, for all sufficiently large $l\in\nn$ and $s^l\in\bS^l$,
\begin{equation}\label{conversion-2}
\sum_{P\in\mathbf{P}_l}\frac{1}{l!}F_e(\pi_{\fr_l}, \crr^l\circ P_\kr^{-1}\circ \cn_{s^l}\circ P_\hr\circ \mathcal{P}^l)\ge 1- (l+1)^{|\bS|}\cdot 2^{-lc},
\end{equation}
with a positive number $c=c(k,\dim\hr,\dim\kr,conv(\fri),\eta)$.
\end{theorem}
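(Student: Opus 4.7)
The plan is to feed the compound-channel codes of Lemma \ref{compound-achiev-input} into Ahlswede's robustification (Theorem \ref{robustification-technique}). Those codes already satisfy a fidelity bound uniformly in every convex combination $\cn_q \in conv(\fri)$; expanding $\cn_q^{\otimes l}$ in the product basis turns this into an average over sequences $s^l \in \bS^l$ with weights $q(s_1)\cdots q(s_l)$, which is exactly the input required by robustification. Its output is the permutation-averaged per-sequence guarantee that appears in (\ref{conversion-2}).

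I first apply Lemma \ref{compound-achiev-input} with the given $k$ and $\eta$, obtaining a sequence of $(l,k_l)$-codes $(\mathcal{P}^l,\crr^l)_{l\in\nn}$ that satisfies (\ref{conversion-1}) together with
\[F_e(\pi_{\fr_l}, \crr^l \circ \cn_q^{\otimes l} \circ \mathcal{P}^l) \ge 1 - 2^{-lc}\qquad (\forall\, q \in \mathfrak{P}(\bS))\]
for all $l \ge l_0(\eta)$. Setting $f(s^l) := F_e(\pi_{\fr_l}, \crr^l \circ \cn_{s^l} \circ \mathcal{P}^l) \in [0,1]$ and using linearity of $F_e(\rho,\cdot)$ in the channel together with $\cn_q^{\otimes l} = \sum_{s^l} q(s_1)\cdots q(s_l)\, \cn_{s^l}$ converts the display above into
\[\sum_{s^l \in \bS^l} q(s_1)\cdots q(s_l)\, f(s^l) \ge 1 - 2^{-lc},\]
valid in particular for every $q \in T(l,\bS)$. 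This is exactly hypothesis (\ref{eq:robustification-1}) with $\gamma = 2^{-lc}$, and Theorem \ref{robustification-technique} then yields
\[\frac{1}{l!}\sum_{P \in \mathbf{P}_l} f(P(s^l)) \ge 1 - (l+1)^{|\bS|}\, 2^{-lc}\qquad (\forall\, s^l \in \bS^l).\]

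To identify this with (\ref{conversion-2}) I invoke the covariance identity
\[\cn_{P(s^l)} = P_\kr^{-1} \circ \cn_{s^l} \circ P_\hr,\]
which is a direct consequence of the fact that $P_\hr, P_\kr$ are implemented by the permutation unitaries $U_P$ on the input and output Hilbert spaces. Once this identity is substituted, $f(P(s^l))$ is precisely the fidelity appearing on the left of (\ref{conversion-2}) and the proof is finished. The main care-point is precisely this covariance step: the conventions $P(s^l)_i = s_{P(i)}$ and $U_P(x_1 \otimes \cdots \otimes x_l) = x_{P^{-1}(1)}\otimes\cdots\otimes x_{P^{-1}(l)}$ fix the direction of conjugation, and one has to check that the output-side permutation $P_\kr^{-1}$ brings the $j$-th factor $\cn_{s_j}$ back to the position where its input originated, so that the net effect is the application of $\cn_{s_{P(i)}}$ at input position $i$. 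The rate bound (\ref{conversion-1}) is inherited verbatim from Lemma \ref{compound-achiev-input}, as the underlying encoder-decoder pair is never modified.
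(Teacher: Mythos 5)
Your proposal is correct and follows exactly the paper's route: take the compound-channel codes from Lemma \ref{compound-achiev-input}, set $f(s^l)=F_e(\pi_{\fr_l},\crr^l\circ\cn_{s^l}\circ\mathcal{P}^l)$, and apply Theorem \ref{robustification-technique}. You merely make explicit two steps the paper leaves implicit --- the affine dependence of $F_e$ on the channel, which turns the uniform bound over $conv(\fri)$ into hypothesis (\ref{eq:robustification-1}), and the covariance identity $\cn_{P(s^l)}=P_\kr^{-1}\circ\cn_{s^l}\circ P_\hr$, which you verify with the correct index conventions.
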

\begin{proof}We let $(\crr^l,\mathcal P^l)$ be as in Theorem \ref{compound-achiev-input}. Setting $f(s^l):= F_{e}(\pi_{\fr_l}, \crr^l\circ \cn_{s^l}\circ \mathcal{P}^l)$ and applying Theorem \ref{robustification-technique} proves the theorem.
\end{proof}
For $l\in\nn$, define a discretely supported probability measure $\mu_l$ by
\[\mu_l:=\frac{1}{l!}\sum_{P\in\mathbf{P}_l}\delta_{(P_\hr\circ \mathcal{P}^l, \crr^l\circ P^{-1}_\kr)}, \]
where $\delta_{(P_\hr\circ \mathcal{P}^l, \crr^l\circ P^{-1}_\kr)} $ denotes the probability measure that puts measure $1$ on the point $(P_\hr\circ \mathcal{P}^l, \crr^l\circ P^{-1}_\kr) $,
we obtain for each $k\in\nn$ a sequence of $(l,k_l)$-random codes achieving
\[ \frac{1}{k}\max_{\rho\in\cs(\hr^{\otimes k})}\inf_{\cn\in conv(\fri)^{\otimes l}}I_c(\rho,\cn). \]
This leads to the following corollary to Theorem \ref{conversion-of-compound-codes}.
\begin{corollary}\label{achievability-finite-avqc}
For any finite AVQC $\fri=\{\cn_s  \}_{s\in\bS}$ we have
\[\mathcal{A}_{\textup{r}}(\fri)\ge \lim_{l\to\infty}\frac{1}{l}\max_{\rho\in\cs(\hr^{\otimes l})}\inf_{\cn\in conv(\fri)}I_c(\rho,\cn^{\otimes l}).  \]
\end{corollary}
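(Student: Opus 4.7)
The plan is to combine Theorem \ref{conversion-of-compound-codes} directly with the definition of random capacity: the explicitly constructed probability measure $\mu_l$ just introduced above the corollary will be shown to satisfy both clauses of Definition \ref{def:random-cap-ent-trans} at the rate claimed.

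First I would verify that $\mu_l$ is a valid $(l,k_l)$-random entanglement transmission code. Since $\mu_l$ is the uniform mixture of the $l!$ point masses $\delta_{(P_\hr\circ\cP^l,\,\crr^l\circ P_\kr^{-1})}$, its support is finite. Measurability of $F_e(\pi_{\fr_l},\crr^l\circ\cn_{s^l}\circ\cP^l)$ with respect to $\sigma_l$ is automatic because $\sigma_l$ is assumed to contain all singleton sets, hence in particular every subset of the finite support of $\mu_l$.

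Second, I would verify the two conditions of Definition \ref{def:random-cap-ent-trans}. For the fidelity condition, the definition of $\mu_l$ gives, for every $s^l\in\bS^l$,
\[\int F_e(\pi_{\fr_l}, \crr'\circ\cn_{s^l}\circ\cP')\,d\mu_l(\cP',\crr')=\frac{1}{l!}\sum_{P\in\mathbf{P}_l} F_e\bigl(\pi_{\fr_l},\,\crr^l\circ P_\kr^{-1}\circ\cn_{s^l}\circ P_\hr\circ\cP^l\bigr),\]
and by (\ref{conversion-2}) this is at least $1-(l+1)^{|\bS|}\cdot 2^{-lc}$, which tends to $1$ uniformly in $s^l$ as $l\to\infty$ since $c>0$ and $|\bS|$ is fixed. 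The rate condition follows from (\ref{conversion-1}) together with the identity $conv(\fri)^{\otimes k}=\{\cn^{\otimes k}:\cn\in conv(\fri)\}$:
\[\liminf_{l\to\infty}\frac{1}{l}\log k_l \geq \frac{1}{k}\max_{\rho\in\cs(\hr^{\otimes k})}\inf_{\cn\in conv(\fri)}I_c(\rho,\cn^{\otimes k})-\eta.\]
Thus, for every $k\in\nn$ and every sufficiently small $\eta>0$, the right-hand side is an achievable rate for $\fri$ with random codes.

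Third, taking the supremum over $k\in\nn$ and letting $\eta\downarrow 0$ yields
\[\A_{\textup{r}}(\fri) \geq \sup_{k\in\nn}\frac{1}{k}\max_{\rho\in\cs(\hr^{\otimes k})}\inf_{\cn\in conv(\fri)}I_c(\rho,\cn^{\otimes k}).\]
To identify this supremum with the limit stated in the corollary, I would appeal to the standard superadditivity of $k\mapsto\max_\rho\inf_{\cn\in conv(\fri)}I_c(\rho,\cn^{\otimes k})$ on the compound channel generated by $conv(\fri)$, so that Fekete's lemma guarantees the existence of the limit and its equality with the supremum.

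No substantial obstacle is expected, since Theorem \ref{conversion-of-compound-codes} has already done the heavy lifting of combining robustification with the compound-channel coding theorem. The only delicate point is the Fekete-type step required to justify that the limit appearing in the corollary exists and coincides with $\sup_k$, but this is a routine property of the regularised coherent information on compound quantum channels.
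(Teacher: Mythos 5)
Your proposal is correct and follows essentially the same route as the paper: it takes the permutation-averaged measure $\mu_l$ built from the codes of Theorem \ref{conversion-of-compound-codes}, checks the two clauses of Definition \ref{def:random-cap-ent-trans} via (\ref{conversion-1}) and (\ref{conversion-2}), and passes to the limit in $k$; the paper leaves the Fekete/superadditivity step implicit, which you rightly make explicit. The only point worth noting is that Lemma \ref{compound-achiev-input} assumes the max-inf coherent information is positive, but when it is nonpositive for every $k$ the claimed inequality is trivial since achievable rates are nonnegative by definition.
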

\subsection{\label{subsec:derandomization}Derandomization}

In this section we will prove the second claim made in Theorem \ref{quant-ahlswede-dichotomy} by following Ahlswede's elimination technique. The proof is based on the following lemma, which shows that not much of common randomness is needed to achieve $\A_{\textrm{r}}(\fri)$.
\begin{lemma}[Random Code Reduction]\label{random-code-reduction}
Let $\fri=\{\cn_s\}_{s\in\mathbf S}$ be a finite AVQC, $l\in\nn$, and $\mu_l$ an $(l,k_l)$-random code for the AVQC $\fri$ with
\begin{equation}\label{eq:random-code-reduction}
\min_{s^l\in\bS^l}\int F_e(\pi_{\fr_l},\crr^l\circ\cn_{s^l}\circ\cP^l)d\mu_l(\cP^l,\crr^l)\ge 1-2^{-la}
\end{equation}
for some positive constant $a\in\rr$.\\
Let $\eps\in (0,1)$. Then for all sufficiently large $l\in\nn$ there exist $l^2$ codes $\{(\cP^l_i,\crr^l_i):i=1,\ldots ,l^2\}\subset \mathcal C(\fr_l,\hr^{\otimes l})\times\mathcal C(\kr^{\otimes l},\fr'_l)$ such that
\begin{equation}\label{eq:random-code-reduction-a}
\frac{1}{l^2}\sum_{i=1}^{l^2} F_e(\pi_{\fr_l},\crr^l_i\circ\cn_{s^l}\circ\cP^l_i)>1-\eps \qquad  \forall s^n\in\mathbf S^n.
\end{equation}
\end{lemma}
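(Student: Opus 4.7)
I would follow Ahlswede's elimination strategy: draw $l^2$ independent samples $(\cP^l_i,\crr^l_i)$, $i=1,\ldots,l^2$, according to the product measure $\mu_l^{\otimes l^2}$ and argue that, with probability tending to $1$, the resulting collection of deterministic codes satisfies (\ref{eq:random-code-reduction-a}) for \emph{every} $s^l\in\bS^l$ simultaneously. This reduces the lemma to a concentration estimate at each fixed $s^l$ followed by a union bound over the $|\bS|^l$ possible disturbance sequences.

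Fix $s^l\in\bS^l$ and define the $[0,1]$-valued i.i.d.\ random variables
\[Y_i(s^l):=1-F_e(\pi_{\fr_l},\crr^l_i\circ\cn_{s^l}\circ\cP^l_i),\qquad i=1,\ldots,l^2.\]
Measurability is guaranteed by the standing assumption on $\sigma_l$ in the definition of a random code, so expectations are well defined. By hypothesis (\ref{eq:random-code-reduction}), $\mathbb{E}[Y_i(s^l)]\leq 2^{-la}$. Hoeffding's inequality then yields
\[\Pr\!\left[\tfrac{1}{l^2}\sum_{i=1}^{l^2}Y_i(s^l)\geq \eps\right]\leq \exp\!\bigl(-2l^2(\eps-2^{-la})^2\bigr),\]
which is at most $\exp(-l^2\eps^2/2)$ as soon as $l$ is large enough that $2^{-la}\leq \eps/2$. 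A union bound over all $s^l\in\bS^l$ gives
\[\Pr\!\left[\exists\, s^l\in\bS^l:\tfrac{1}{l^2}\sum_{i=1}^{l^2}Y_i(s^l)\geq\eps\right]\leq |\bS|^l\exp(-l^2\eps^2/2),\]
and the right-hand side tends to $0$ as $l\to\infty$. Hence, for all sufficiently large $l$, the complementary event has positive probability, and therefore a concrete realization $\{(\cP^l_i,\crr^l_i)\}_{i=1}^{l^2}\subset\mathcal C(\fr_l,\hr^{\otimes l})\times\mathcal C(\kr^{\otimes l},\fr'_l)$ satisfying (\ref{eq:random-code-reduction-a}) must exist.

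The only delicate point---the one I would flag as the main obstacle---is the calibration of the sample size: the concentration exponent has to dominate the union-bound factor $|\bS|^l$, which is singly exponential in $l$. Taking $l^2$ i.i.d.\ samples yields a Hoeffding exponent of order $l^2$ and therefore wins; any sub-linear number of samples would be too small to beat the union bound, while larger polynomial counts would work equally well but are not needed. This quadratic-in-block-length choice is precisely Ahlswede's "code of codes of polynomial size" trick from \cite{ahlswede-elimination}, transplanted verbatim into the quantum entanglement-transmission setting. Crucially, the finiteness of $\bS$ is used only here; nothing else about $\fri$ enters the argument, so the same reduction would continue to work for any finitely-supported disturbance alphabet.
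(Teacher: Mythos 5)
Your proof is correct and follows essentially the same route as the paper: sample $l^2$ codes i.i.d.\ from $\mu_l$, apply an exponential concentration bound at each fixed $s^l$, union-bound over the $|\bS|^l$ state sequences, and conclude existence from positive probability. The only difference is cosmetic --- you invoke Hoeffding's inequality where the paper applies Markov's inequality to the exponential moment via $2^{\gamma t}\le 1+t2^{\gamma}$; both yield a doubly exponential concentration term that defeats the singly exponential union bound.
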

\begin{proof}We define random variables $(\Lambda_i,\Omega_i)$, $i=1,\ldots, l^2$ with values in $\mathcal C(\fr_l,\hr^{\otimes l})\times\mathcal C(\kr^{\otimes l},\fr'_l) $ which are i.i.d. according to $\mu_l^{\otimes l^2}$. Using Markov's inequality and the inequality $2^{\gamma t}\leq (1-t)2^{\gamma\cdot  0}+t2^{\gamma}\le 1+t2^{\gamma},\ t\in[0,1], \gamma>0$ as well as the union bound we get\\
$\mathbb P\left(\frac{1}{l^2}\sum_{i=1}^{l^2} F_e(\pi_{\fr_l},\Lambda_i\circ\cn_{s^l}\circ\Omega_i)>1-\eps\ \forall s^l\in\bS^l\right)\geq1-|\mathbf S|^l\cdot 2^{-l^2\eps}$.
For large enough $l$ the above probability is positive. This shows the existence of the required realization of $(\Lambda_i,\Omega_i)_{i=1}^{l^2}$. \end{proof}
\begin{proof}\emph{(Of the second claim in Theorem \ref{quant-ahlswede-dichotomy})}. As shown above, in order to achieve $\A_r(\fri)$ we need only random codes with discrete support on subexponentially many points. Whenever $C_d(\fri)>0$ and $\A_r(\fri)>0$ the sender can transmit classical information at rate zero over the AVQC in order to derandomize the code without any asymptotic reduction in the capacity for entanglement transmission.
\end{proof}
\subsection{\label{subsec:infinite-avqc}Infinite AVQC's}
Let $\fri=\{\cn_s  \}_{s\in\bS}$ with $|\bS|=\infty$. We consider the set $\tilde\fri:=\overline{conv(\fri)}^{||\cdot||_{\lozenge}}$ - the closure of $conv(\fri)$ w.r.t. $||\cdot ||_{\lozenge}$. Suppose that
\begin{equation}\label{eq:infinite-1}
  D_{\lozenge}(\tilde\fri,\partial\mathcal{C})=:a>0.
\end{equation}
Our goal is to find an outer approximation of $\tilde\fri$ in Hausdorff metric (cf. Section \ref{Notation and Conventions}) by polytopes contained entirely in the set $\mathcal{C}(\hr,\kr)$. To this end, we need the following result of convex analysis (cf. Theorem 3.1.6, p. 109, in \cite{webster}).
\begin{theorem}\label{polytope-approximation}
Let $A$ be a non-empty compact convex set in $\rr^d$ and let $\eps>0$. Then there exist polytopes $P,Q$ in $\rr^d$ such that $P\subseteq A\subseteq Q$ and $D(A,P)\le \eps$, $D(A,Q)\le \eps$, where $D(\cdot,\cdot)$ denotes the Hausdorff distance induced by the euclidean norm on $\rr^d$.
\end{theorem}
We note that the presence of $\rr^d$ and the euclidean norm in Theorem \ref{polytope-approximation} is not essential at all. The theorem holds as well for any finite dimensional normed space with corresponding Hausdorff distance induced by the given norm.\\
\begin{proof}\emph{(Of Theorem \ref{avqc-infinite}.)} We apply Theorem \ref{polytope-approximation} to the space $H(\hr,\kr):=\mathcal{B}_{h}(\mathcal{B}(\hr),\mathcal{B}(\kr))$ of hermiticity preserving linear maps from $\mathcal{B}(\hr)$ into $\mathcal{B}(\kr)$ endowed with $||\cdot||_{\lozenge}$ and obtain for each $\eps>0$ a polytope $\bar{Q}_{\eps}$ with $\tilde\fri \subseteq \bar{Q}_{\eps}\quad \textrm{and}\quad D_{\lozenge}( \tilde\fri,\bar{Q}_{\eps} )\le\eps.$\\
Let $E$ denote the affine hull of $\mathcal{C}(\hr,\kr)$ in $H(\hr,\kr)$ and set $Q_{\eps}:=E\cap \bar{Q}_{\eps}$. Then $Q_{\eps}$ is a polytope and for all sufficiently small $\eps>0$ ($\eps\le \frac{a}{3}$, say, is small enough for this purpose) we have $\tilde\fri \subseteq Q_{\eps}\subset \mathcal{C}(\hr,\kr)$ by (\ref{eq:infinite-1}). More important, we also have
\begin{equation}\label{eq:infinite-2}
  D_{\lozenge}(\tilde\fri, Q_{\eps})\le D_{\lozenge}(\tilde\fri, \bar{Q}_{\eps})\le \eps.
\end{equation}
Let $\fri_{\eps}=\{ \cn_1,\ldots, \cn_K \}$ be the extremal points of $Q_\eps$. Then $\fri_\eps$ has the following properties:
1) $conv(\fri)\subset\tilde\fri\subset Q_\eps=conv(\fri_\eps)$, 2) $D_{\lozenge}(\tilde\fri, conv(\fri_{\eps}))\le \eps $ for all sufficiently small $\eps>0$ by (\ref{eq:infinite-2}).\\
We can now apply all results from Section \ref{subsec:random-achieve-finite} to the finite AVQC generated by $\fri_{\eps}$ giving us to each sufficiently small $\eta>0$ and $k\in\nn$ a sequence of $(l,k_l)$-random codes $(\cP^l,\crr^l)_{l\in\nn}$ with $\cP^l\in\mathcal{C}(\fr_l,\hr^{\otimes l})$, $\crr^l\in\mathcal{C}(\kr^{\otimes l},\fr'_l)$,
\begin{equation}\label{eq:infinite-3}
  F_e(\pi_{\fr_l},\crr^l\circ \cn_{t^l}\circ\cP^l)\ge 1-(l+1)^{K}\cdot 2^{-lc} \qquad \forall \ t^l\in \{1,\ldots, K  \}^{l},
\end{equation}
and
\begin{equation}\label{eq:infinite-4}
\frac{1}{l}\log k_l\ge \frac{1}{k}\inf_{\cn\in conv(\fri_{\eps})}I_c(\rho,\cn^{\otimes k})-\frac{\eta}{2},
\end{equation}
for any $\rho\in\cs (\hr^{\otimes k})$ and all sufficiently large $l\in\nn$ with a positive constant $c=c(k,\dim\hr,\dim\kr,\fri_{\eps},\eta)$.
Since $\fri\subseteq\tilde\fri\subseteq conv(\fri_{\eps})$ we can find to any finite collection $\cn'_1,\ldots,\cn'_l\in\fri$ probability distributions $q_1,\ldots ,q_l\in\mathfrak{P}(\{1,\ldots,K  \})$ with $\cn'_i=\sum_{j=1}^{K}q_i(j)\cn_j\qquad (\cn_j\in\fri_{\eps}, j\in\{1,\ldots,K   \}).$
Thus, for any choice of $\cn'_1,\ldots, \cn'_l\in\fri$
\begin{eqnarray}\label{eq:infinite-5}
  F_e(\pi_{\fr_l},\crr^l\circ(\otimes_{i=1}^l\cn'_i )\circ \cP^l)\ge 1-(l+1)^{K}\cdot 2^{-lc},
\end{eqnarray}
by (\ref{eq:infinite-3}).
On the other hand, Lemma 16 in  \cite{bbn-2} and $D_{\lozenge}(conv(\fri), conv(\fri_{\eps}))\le D_{\lozenge}(\tilde\fri, conv(\fri_{\eps}))\le \eps$ shows that
\begin{equation}\label{eq:infinite-7}
 \frac{1}{l}\log k_l\ge \frac{1}{k}\inf_{\cn\in conv(\fri)}I_c(\rho,\cn^{\otimes k})-{\eta},
\end{equation}
whenever $\eps$ is small enough. It should be noted that $k$ and $l$ in the above equation tend to infinity when $\eta$ goes to zero. Since $\eta>0$ was arbitrary, we are done.
\end{proof}
\section{\label{sec:symmetrizability}Symmetrizability}
In this section we introduce a notion of symmetrizability which is a sufficient and necessary condition for $C_{\textrm{det}}(\fri)=0$. Our approach is motivated by the corresponding concept for arbitrarily varying channels with classical input and quantum output (cq-AVC) given in \cite{ahlswede-blinovsky}. In what follows we will restrict ourselves to the case $|\bS|<\infty$.
\begin{definition}\label{def:c-symmetrizability}
Let $\bS$ be a finite set and $\fri=\{\cn_s  \}_{s\in\bS}$ an AVQC.
\begin{enumerate}
\item $\fri$ is called $l$-symmetrizable, $l\in\nn$, if for each finite set $\{\rho_1,\ldots,\rho_K  \}\subset \cs(\hr^{\otimes l})$, $K\in \nn$, there is a map $p:\{\rho_1,\ldots, \rho_K  \}\to \mathfrak{P}(\bS^l)$ such that for all $i,j\in\{1,\ldots, K  \}$ the following holds:
\begin{equation}\label{eq:c-symmetrizable}
\sum_{s^l\in\bS^l}p(\rho_i)(s^l)\cn_{s^l}(\rho_j)= \sum_{s^l\in\bS^l}p(\rho_j)(s^l)\cn_{s^l}(\rho_i).
\end{equation}
\item We call $\fri$ symmetrizable if it is $l$-symmetrizable for all $l\in\nn$.
\end{enumerate}
\end{definition}
\begin{theorem}\label{symm-equiv-C-0}
Let $\fri=\{\cn_s  \}_{s\in \bS}$, $|\bS|<\infty$, be an AVQC. Then $\fri$ is symmetrizable if and only if $C_{\textup{det}}(\fri)=0$.
\end{theorem}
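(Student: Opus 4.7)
The plan is to prove the two directions of the equivalence separately.

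For sufficiency, assume $\fri$ is symmetrizable and fix any $(l,M_l)$-code $\mathfrak{C}_l=(\rho_i,D_i)_{i=1}^{M_l}$ with $M_l\ge 2$. Applying $l$-symmetrizability to the finite set $\{\rho_1,\ldots,\rho_{M_l}\}\subset\cs(\hr^{\otimes l})$ yields distributions $p(\rho_1),\ldots,p(\rho_{M_l})\in\mathfrak{P}(\bS^l)$ satisfying (\ref{eq:c-symmetrizable}). Set $\sigma_{ij}:=\sum_{s^l}p(\rho_i)(s^l)\,\cn_{s^l}(\rho_j)$, so that $\sigma_{ij}=\sigma_{ji}$. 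Since the deterministic worst case dominates any mixture, the randomized jammer $\nu:=\frac{1}{M_l}\sum_i p(\rho_i)$ gives the lower bound
\[
\bar{P}_{e,l}(\fri)\ge\frac{1}{M_l^2}\sum_{i,j}\bigl(1-\tr(\sigma_{ij}D_j)\bigr).
\]
A swap of the dummy indices $(i,j)\mapsto (j,i)$ together with $\sigma_{ij}=\sigma_{ji}$ produces $2\sum_{i,j}\tr(\sigma_{ij}D_j)=\sum_{i,j}\tr(\sigma_{ij}(D_i+D_j))$; using $D_i+D_j\le\idn$ for $i\ne j$ and $\tr\sigma_{ij}=1$, this sum is at most $M_l^2+M_l$, which leaves $\bar{P}_{e,l}(\fri)\ge\tfrac12-\tfrac{1}{2M_l}\ge\tfrac14$ uniformly in $l$. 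This precludes positive classical deterministic capacity, whence $C_{\textup{det}}(\fri)=0$.

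For necessity I argue the contrapositive: if $\fri$ is not $l_0$-symmetrizable for some $l_0\in\nn$, witnessed by states $\rho_1,\ldots,\rho_K\in\cs(\hr^{\otimes l_0})$, I want to show $C_{\textup{det}}(\fri)>0$. The key move is a reduction to the arbitrarily varying cq-channel dichotomy of Ahlswede--Blinovsky \cite{ahlswede-blinovsky}. Introduce the cq-AVC $V$ with classical input alphabet $\mathcal{X}:=\{1,\ldots,K\}$, state space $\bS^{l_0}$, and quantum output
\[
V_{s^{l_0}}(i):=\cn_{s^{l_0}}(\rho_i)\in\cs(\kr^{\otimes l_0}).
\]
The symmetrizability condition of \cite{ahlswede-blinovsky} for $V$ reads $\sum_{s^{l_0}}U(s^{l_0}|i)V_{s^{l_0}}(j)=\sum_{s^{l_0}}U(s^{l_0}|j)V_{s^{l_0}}(i)$ for all $i,j$, which is literally Definition \ref{def:c-symmetrizability}(1) applied to $\fri$ restricted to $\{\rho_1,\ldots,\rho_K\}$. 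Hence $V$ fails to be symmetrizable in the AB sense, and the cq-AVC Ahlswede dichotomy supplies a rate $R>0$ together with $(n,\lfloor 2^{nR}\rfloor)$-codes $(c_n,(F_m)_m)$ for $V$ whose worst-case average error vanishes as $n\to\infty$.

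It remains to lift these cq-AVC codes back to the AVQC. Use the encoder $m\mapsto\rho_{c_n(m,1)}\otimes\cdots\otimes\rho_{c_n(m,n)}\in\cs(\hr^{\otimes nl_0})$, where $c_n(m,j)$ denotes the $j$th symbol of the cq-AVC codeword for message $m$, and keep the POVM $(F_m)_m$ on $\kr^{\otimes nl_0}$ as decoder. Every adversarial state sequence in $\bS^{nl_0}$ decomposes into $n$ blocks in $\bS^{l_0}$ and delivers at the receiver exactly the cq-AVC output state for the corresponding adversary; therefore the lifted AVQC error equals the cq-AVC error, giving $C_{\textup{det}}(\fri)\ge R/l_0>0$. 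I expect the genuine obstacle to lie in the necessity direction, and specifically in pinning down the precise form of the AB cq-AVC dichotomy and confirming that the block-structured cq-AVC $V$ above falls within its hypotheses; once that identification is secured, the cq-AVC machinery supplies the positive rate essentially for free.
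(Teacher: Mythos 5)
Your proof is correct and follows exactly the route the paper intends: the paper's own ``proof'' is only a pointer to \cite{ericson}, \cite{csiszar-narayan} and \cite{ahlswede-blinovsky}, and your two directions are precisely the Ericson-type jammer-averaging bound $\bar P_{e,l}(\fri)\ge \tfrac12-\tfrac{1}{2M_l}\ge\tfrac14$ for the sufficiency part, and the reduction of a witness $\{\rho_1,\ldots,\rho_K\}\subset\cs(\hr^{\otimes l_0})$ of non-$l_0$-symmetrizability to a non-symmetrizable cq-AVC on the block alphabet $\bS^{l_0}$ for the necessity part. The one hypothesis you should state explicitly is that you are invoking the full strength of the Ahlswede--Blinovsky dichotomy, namely that non-symmetrizability of a cq-AVC implies strictly positive deterministic capacity (not merely $C_{\textup{det}}=C_{\textup{ran}}$); this is indeed what \cite{ahlswede-blinovsky} proves, so the argument goes through.
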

\begin{proof}The proof follows closely the corresponding arguments given in \cite{ericson}, \cite{csiszar-narayan}, and \cite{ahlswede-blinovsky}. \end{proof}
\begin{corollary}\label{q-det-=0}
If the AVQC $\fri=\{\cn  \}_{s\in \bS}$ is symmetrizable then $\mathcal{A}_{\textup{d}}(\fri)=0$.
\end{corollary}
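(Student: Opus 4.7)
The plan is to combine Theorem~\ref{symm-equiv-C-0} with the inequality $\A_{\textup{d}}(\fri) \le C_{\textup{det}}(\fri)$ sketched in the Remark after Theorem~\ref{quant-ahlswede-dichotomy}: Theorem~\ref{symm-equiv-C-0} immediately gives $C_{\textup{det}}(\fri) = 0$ under symmetrizability, so the only substantive step is to verify this inequality.

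For that, take any deterministic $(l, k_l)$ entanglement transmission code $(\cP^l, \crr^l)$ with $\inf_{s^l \in \bS^l} F_e(\pi_{\fr_l}, \crr^l \circ \cn_{s^l} \circ \cP^l) \ge 1 - \eps_l$, fix an orthonormal basis $\{|i\rangle\}_{i=1}^{k_l}$ of $\fr_l$, and convert the code into a classical $(l, k_l)$ code $\mathfrak{C}_l = (\rho_i, D_i)_{i=1}^{k_l}$ by setting $\rho_i := \cP^l(|i\rangle\langle i|)$, $D_i := (\crr^l)^\ast(|i\rangle\langle i|)$ for $i < k_l$, and absorbing the leftover positive operator $(\crr^l)^\ast(\idn_{\fr_l'} - \sum_{i < k_l}|i\rangle\langle i|)$ into $D_{k_l}$; since $\crr^l$ is trace preserving the Heisenberg adjoint $(\crr^l)^\ast$ is unital, so $\sum_i D_i = \idn_{\kr^{\otimes l}}$. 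The pivotal step is the pointwise chain of inequalities
\[F_e(\pi_{\fr_l}, \crr^l \circ \cn_{s^l} \circ \cP^l) \le \frac{1}{k_l}\sum_{i=1}^{k_l}\langle i | (\crr^l \circ \cn_{s^l} \circ \cP^l)(|i\rangle\langle i|) | i\rangle \le \frac{1}{k_l}\sum_{i=1}^{k_l} \tr(D_i \cn_{s^l}(\rho_i)),\]
valid for every $s^l \in \bS^l$. The left inequality follows from the convexity of $\rho \mapsto F_e(\rho, \Phi) = \sum_m |\tr(\rho E_m)|^2$ (Schumacher's Kraus formula, convex since $|\cdot|^2$ is), applied to the decomposition $\pi_{\fr_l} = \tfrac{1}{k_l}\sum_i |i\rangle\langle i|$, together with the fact that for pure inputs $F_e$ reduces to the usual overlap. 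The right inequality just reflects the absorption of the slack into $D_{k_l}$. Hence $\bar P_{e,l}(\fri) \le \eps_l$ and $\tfrac{1}{l}\log k_l$ is classically achievable, which establishes $\A_{\textup{d}}(\fri) \le C_{\textup{det}}(\fri)$.

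Combined with $C_{\textup{det}}(\fri) = 0$ from Theorem~\ref{symm-equiv-C-0} we obtain $\A_{\textup{d}}(\fri) = 0$. The only genuine content is the convexity inequality via Schumacher's formula; everything else is routine bookkeeping about the Heisenberg adjoint and POVM normalization, so I anticipate no real obstacle.
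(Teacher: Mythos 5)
Your proposal is correct and follows essentially the same route as the paper: the paper's proof is exactly ``note that $\A_{\textup{d}}(\fri)\le C_{\textup{det}}(\fri)$ and apply Theorem~\ref{symm-equiv-C-0},'' with that inequality justified only by the earlier Remark's appeal to convexity of entanglement fidelity in the input state. You have simply filled in the details of that convexity argument (Schumacher's Kraus-operator formula, the reduction of $F_e$ to the overlap on pure inputs, and the POVM construction via the unital adjoint of the decoder), all of which check out.
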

\begin{proof}Note that $\mathcal{A}_{\textrm{d}}(\fri)\le C_{\textrm{det}}(\fri)  $ and apply Theorem \ref{symm-equiv-C-0}.
\end{proof}
What is missing now is the reverse direction in Corollary \ref{q-det-=0}: That an AVQC with $\mathcal{A}_{\textup{d}}(\fri)=0$ is symmetrizable. It is not known yet whether this implication is true or not.\\
The final issue in this section is a sufficient condition for $\mathcal{A}_{\textup{r}}(\fri)=0$ which is based on the notion of qc-symmetrizability. Let $\B_+(\hr)\subset\B(\hr)$ be the set of nonnegative operators. Set
\begin{eqnarray*}\textrm{QC}(\hr,\bS):=\{\{T_s\}_{s\in\bS}\subset\B_+(\hr):\sum_{s\in\bS}T_s=\mathbf 1_\hr\}.\end{eqnarray*}
For a given finite set of quantum channels $\fri=\{\cn_s  \}_{s\in \bS}$ and $T\in \textrm{QC}(\hr,\bS) $ we define a CPTP map $\mathcal{M}_{T,\bS}:\mathcal{B}(\hr)\otimes \mathcal{B}(\hr)\to\mathcal{B}(\kr)$ by
\begin{eqnarray}\label{def-M}
\mathcal{M}_{T,\bS}(a\otimes b):=\sum_{s\in \bS}\textrm{tr}(T_s a)\cn_s(b).
\end{eqnarray}
\begin{definition}\label{def:symmetrizability}
An arbitrarily varying quantum channel, generated by a finite set $\fri=\{\cn_s  \}_{s\in \bS}$, is called qc-symmetrizable if there is $T\in \textrm{QC}(\hr,\bS) $ such that for all $a,b\in\mathcal{B}(\hr) $
\begin{equation}\label{eq:symmetrizable}
  \mathcal{M}_{T,\bS}(a\otimes b)=\mathcal{M}_{T,\bS}(b\otimes a)
\end{equation}
holds, where $\mathcal{M}_{T,\bS}:\mathcal{B}(\hr)\otimes \mathcal{B}(\hr)\to\mathcal{B}(\kr) $ is the CPTP map defined in (\ref{def-M}).
\end{definition}
The best illustration of the definition of qc-symmetrizability is given in the proof of our next theorem:
\begin{theorem}\label{symm-implies-0-capacity}
If an arbitrarily varying quantum channel generated by a finite set $\fri=\{\cn_s  \}_{s\in \bS}$ is qc-symmetrizable, then for any sequence of $(l,k_l)$-random codes $(\mu_l)_{l\in\nn}$ with $k_l=\dim \fr_l\ge 2$ for all $l\in\nn$ we have
\[\inf_{s^l\in \bS^l}\int F_e(\pi_{\fr_l}, \crr^l\circ \cn_{s^l}\circ \mathcal{P}^l )d\mu_l (\crr^l,\mathcal{P}^l)\le \frac{1}{2}, \]
for all $l\in\nn$. Thus $\mathcal{A}_{\textup{r}}(\fri)=0$, and consequently
\[\mathcal{A}_{\textup{d}}(\fri)=0.  \]
\end{theorem}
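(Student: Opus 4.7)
The plan is to have the adversary play a carefully chosen mixed strategy that, via qc-symmetrizability, collapses the effective channel seen by the receiver into an entanglement-breaking (measure-and-prepare) map. One then invokes the elementary fact that any entanglement-breaking channel has entanglement fidelity at most $1/k_l$ on the input $\pi_{\fr_l}$, which already delivers the $1/2$ bound under the hypothesis $k_l\ge 2$.

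First I would fix an auxiliary state $\sigma\in\cs(\hr^{\otimes l})$ (for concreteness $\sigma=\sigma_0^{\otimes l}$) and define a probability distribution $q$ on $\bS^l$ by $q(s^l):=\tr\bigl((T_{s_1}\otimes\cdots\otimes T_{s_l})\,\sigma\bigr)$. Since the infimum over $s^l\in\bS^l$ is bounded above by the average against \emph{any} distribution on $\bS^l$, and $F_e$ is linear in the channel argument,
\begin{equation*}
\inf_{s^l\in\bS^l}\int F_e(\pi_{\fr_l},\crr^l\circ\cn_{s^l}\circ\cP^l)\,d\mu_l \;\le\; \int F_e\bigl(\pi_{\fr_l},\crr^l\circ\bar\cn_q\circ\cP^l\bigr)\,d\mu_l,
\end{equation*}
where $\bar\cn_q:=\sum_{s^l\in\bS^l}q(s^l)\,\cn_{s^l}$. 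It therefore suffices to bound the right-hand side by $1/2$.

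The decisive step is to recognize that $\bar\cn_q$ is entanglement-breaking. By construction $\bar\cn_q(b)=\mathcal{M}_{T,\bS}^{\otimes l}(\sigma\otimes b)$ after the natural reshuffling of tensor factors. The single-letter identity (\ref{eq:symmetrizable}) tensorizes on product inputs and extends by bilinearity to the swap identity $\mathcal{M}_{T,\bS}^{\otimes l}(a\otimes b)=\mathcal{M}_{T,\bS}^{\otimes l}(b\otimes a)$ for all $a,b\in\B(\hr^{\otimes l})$. Applying this swap yields
\begin{equation*}
\bar\cn_q(b)\;=\;\mathcal{M}_{T,\bS}^{\otimes l}(b\otimes\sigma)\;=\;\sum_{s^l\in\bS^l}\tr\bigl((T_{s_1}\otimes\cdots\otimes T_{s_l})\,b\bigr)\,\cn_{s^l}(\sigma),
\end{equation*}
which is manifestly a measure-and-prepare channel with POVM $\{T_{s_1}\otimes\cdots\otimes T_{s_l}\}_{s^l\in\bS^l}$ on $\hr^{\otimes l}$ and output states $\cn_{s^l}(\sigma)\in\cs(\kr^{\otimes l})$. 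Since the class of entanglement-breaking maps is closed under pre- and post-composition with CPTP maps, $\Phi:=\crr^l\circ\bar\cn_q\circ\cP^l$ is entanglement-breaking as well.

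Finally I would invoke the bound $F_e(\pi_{\fr_l},\Phi)\le 1/k_l$ valid for any entanglement-breaking $\Phi:\B(\fr_l)\to\B(\fr'_l)$: writing $\Phi(\,\cdot\,)=\sum_\alpha\tr(E_\alpha\,\cdot\,)\sigma_\alpha$ and expanding $F_e$ in an orthonormal basis of $\fr_l$ yields $F_e(\pi_{\fr_l},\Phi)=\tfrac{1}{k_l^2}\sum_\alpha\tr\bigl(E_\alpha\,p_{\fr_l}\sigma_\alpha p_{\fr_l}\bigr)\le \tfrac{1}{k_l^2}\tr(\idn_{\fr_l})=1/k_l$, where each term is estimated using $\|p_{\fr_l}\sigma_\alpha p_{\fr_l}\|_\infty\le 1$. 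Under the hypothesis $k_l\ge 2$ this gives the announced $1/2$; integrating against $\mu_l$ completes the proof of the fidelity bound, whence $\A_r(\fri)=0$ and, a fortiori, $\A_d(\fri)=0$ (since every deterministic code is a point-mass random code). The only conceptual difficulty is the third paragraph: one has to see that qc-symmetrizability is precisely the symmetry needed to flip $\bar\cn_q$ from an opaque convex combination of CPTP maps into a transparent measure-and-prepare channel. Once that is recognized, the remainder is routine bookkeeping.
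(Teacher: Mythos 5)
Your proof is correct and follows essentially the same route as the paper's: you average the adversary's channel against the product distribution induced by $T$ and a fixed reference state (which is exactly the paper's $E_1^{\otimes l}$), use the qc-symmetrization swap to rewrite it as the measure-and-prepare map $E_2^{\otimes l}$, and then bound the entanglement fidelity of the resulting entanglement-breaking channel by $1/k_l\le 1/2$. The only cosmetic difference is that you verify the $1/k_l$ bound by direct computation, whereas the paper cites the standard fact that a separable state has overlap at most $1/k_l$ with a maximally entangled state of Schmidt rank $k_l$.
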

\emph{Remark: Our Definition \ref{def:symmetrizability} addresses the notion of qc-symmetrizability for block length $l=1$. In our accompanying paper \cite{abbn-1} we show that the corresponding definition for arbitrary $l$ is equivalent.}
\begin{proof}Let $l\in\nn$. We fix $\sigma\in \cs(\hr)$ and define $E_1,E_2 \in\mathcal C(\hr,\kr)$ by $E_1(a):=\mathcal{M}_{T,\bS}(\sigma\otimes a)$,
\begin{eqnarray}\label{eq:0-cap-3}
E_2(a):=\mathcal{M}_{T,\bS}(a\otimes \sigma)=\sum_{s\in \bS}\textrm{tr}(E_s a)\cn_s(\sigma).
\end{eqnarray}
Setting $E_{s^l}:= E_{s_1}\otimes \ldots \otimes E_{s_l}$, we can show that
\begin{eqnarray}\label{eq:0-cap-5}
  &&\int F_e(\pi_{\fr_l}, \crr^l\circ E_1^{\otimes l}\circ \mathcal{P}^l)d\mu_l (\crr^l,\mathcal{P}^l )\geq\nonumber\\
&&\inf_{s^l\in \bS^l}\int F_e(\pi_{\fr_l}, \crr^l\circ \cn_{s^l}\circ \mathcal{P}^l )d\mu_l (\crr^l,\mathcal{P}^l ).
\end{eqnarray}
Now, by the assumed qc-symmetrizability, we get
\begin{eqnarray}
 id_{\fr_l}\otimes (\crr^l\circ E_1^{\otimes l} )=id_{\fr_l}\otimes (\crr^l\circ E_2^{\otimes l} ),\ \textrm{thus}
\end{eqnarray}
\begin{equation}\label{eq:0-cap-7}
F_e(\pi_{\fr_l}, \crr^l\circ E_1^{\otimes l}\circ \mathcal{P}^l)=F_e(\pi_{\fr_l}, \crr^l\circ E_2^{\otimes l}\circ \mathcal{P}^l).
\end{equation}
But $E_2$ is entanglement breaking, implying that $(id_{\fr_l}\otimes \crr^l\circ E_2^{\otimes l}\circ \mathcal{P}^l)(|\psi_l\rangle\langle \psi_l|)$ (for a purification $\psi_l$ of $\pi_{\fr_l}$)
is separable. A standard result from entanglement theory implies that
\begin{equation}\label{eq:0-cap-9}
\langle \psi_l, (id_{\fr_l}\otimes \crr^l\circ E_2^{\otimes l}\circ \mathcal{P}^l)(|\psi_l\rangle\langle \psi_l|)        \psi_l\rangle \le \frac{1}{k_l}
\end{equation}
holds, since $\psi_l$ is maximally entangled with Schmidt rank $k_l$. Combining (\ref{eq:0-cap-5}), (\ref{eq:0-cap-7}), (\ref{eq:0-cap-9}) and our assumption $k_l\geq2$ we get\\
$\inf_{s^l\in \bS^l}\int F_e(\pi_{\fr_l}, \crr^l\circ \cn_{s^l}\circ \mathcal{P}^l )d\mu_l(\crr^l,\mathcal{P}^l)\le \frac{1}{2}$.
\end{proof}


\end{document}